\newtheorem{proposition}{Proposition}
\newtheorem{theorem}{Theorem}
\newtheorem{claim}{Claim}
\newenvironment{proof}{\noindent\textbf{Proof\ }}{\hspace*{\fill}$\Box$\medskip}
\begin{document}

\title{Randomized Two-Valued Bounded Delay Online Buffer Management}
\author{Christoph D\"urr\thanks{Sorbonne Universit\'{e}, CNRS, Laboratoire d'informatique de Paris 6, LIP6, Paris, France. Corresponding author: christoph.durr@lip6.fr. https://orcid.org/0000-0001-8103-5333}
\and
	Shahin Kamali\thanks{Department of Computer Science, University of Manitoba, Winnipeg, Canada. shahin.kamali@umanitoba.ca. https://orcid.org/0000-0003-1404-2212}}

\maketitle
\begin{abstract}
In the bounded delay buffer management problem unit size packets arrive online
to be sent over a network link.  The objective is to maximize the total weight
of packets sent before their deadline. In this paper we are interested in
the two-valued variant of the problem, where every packet has either low ($1$)
or high priority weight ($\alpha>1$). We show that the optimal randomized competitive ratio
against an oblivious adversary is $1+(\alpha-1)/(\alpha^2+\alpha)$.
\end{abstract}

\paragraph{Keywords:} competitive ratio, oblivious adversary, buffer management, maximum throughput scheduling.

\section{Introduction}

Online Buffer Management is a scheduling problem which arises in network routers. Packets arrive online to be sent on a specific link, the goal is to maximize the total weight of sent packets under various constraints. Different models have been studied in the past, the FIFO model, where the router stores pending packets in a limited capacity buffer, and the bounded delay model where packets can stay only limited time in the buffer.

We are particularly interested in the latter model, which can be formalized as the following scheduling problem. Time is partitioned into \emph{time slots}. At each time slot, a (potentially empty) set of  jobs of unit length arrive online on a single machine. Job $j$ arrives at a release time $r_j\in \mathbb N$, has a deadline $d_j\in\mathbb N$ and a weight $w_j\in \mathbb R^+$. At every time slot $t\in\mathbb N$, there is a set of \emph{pending jobs}, which is updated by jobs released at time $t$ and by jobs expiring at time $t$. The algorithm can choose to schedule one of the pending jobs, at every time slot
The goal is to maximize the total weight of the scheduled jobs.  The usual restriction on the online algorithm is that these decisions need to be made without knowledge of future arriving jobs. Its performance is measured by the competitive ratio, which compares the objective value reached by the algorithm
with the objective value of the optimal schedule, maximizing the ratio over all possible instances. The deterministic competitive ratio of the problem is defined as the best competitive ratio among all deterministic online algorithms. As randomization usually helps in these online settings, we  consider the randomized competitive ratio against an oblivious adversary. The
term \emph{adversary} comes from the game theoretical setting of an online problem, played between an online algorithm which needs to make decisions at every time slot and a malicious adversary which generates the instance over time. The adversary is ``oblivious" in the sense that it is not aware of the random choices made by the randomized online algorithm. Other adversarial models exist but are not considered in this paper.

\section{Related work}

This problem has been introduced in 2001, together with a lower bound of
$\varphi \approx 1.618$ on the deterministic competitive ratio
{\cite{hajek_competitiveness_2001} and {\cite{kesselman_buffer_2004} and an upper bound of 2, achieved by
the \textsc{Greedy} algorithm which schedules always the heaviest pending job (ties are broken by executing the most urgent job). This gap between lower and upper bound generated a series of studies of variants of this problem, mostly with restrictions on the deadlines of the job, such as agreeable deadlines (jobs can be ordered by release times and deadlines at the same time) or $s$-uniform or $s$-bounded instances ($d_j-r_j$ is $s$ or at most $s$) for some parameter $s$$>1$. Eventually in SODA 2019, the gap was closed, with a sophisticated algorithm and analysis \cite{vesely_phi-competitive_2019}.   The randomized competitive ratio in the oblivious adversary model of this problem is still open, and the optimal competitive ratio is
known to be between $1.25$ \cite{chin_online_2003} and $1.582$ \cite{bartal_online_2004,chin_online_2006}. 

The paper \cite{chin_improved_2004} considered a variant of the problem, where every job has not unit processing time, but a processing time $p\in\mathbb N$, all jobs are tight in the sense $d_j=r_j+p$, and the algorithm can preempt job executions. In this setting, the authors give a barely random algorithm with competitive ratio 2, and showed that this is optimal for barely random algorithms.  By scaling, this is equivalent to the problem with unit processing time jobs by fractional release times and deadlines.  In our problem these times are restricted to integers, and in this context preemption is of no help to the algorithm.  We refer to \cite{goldwasser_survey_2010} for a survey on packet scheduling, as well as to the more recent introduction of \cite{vesely_phi-competitive_2019} and references therein.

In this paper we consider a particular variant of the bounded delay buffer management
problem, where every job can have either low or high weight, that is
$w_j\in\{1,\alpha\}$ for some parameter $\alpha >1$.  The 2-valued variant has
been studied in \cite{englert_considering_2012} who showed the deterministic competitive 
ratio is $\min\{1 + 1 / \alpha, 1+(\alpha - 1) / (\alpha + 1)\}$.
It is motivated by the fact that some applications, as for example online games, could send packets with two kind of informations: Major updates in the game configuration (such as the arrival or departure of participants), and minor updates (such as not critical moves).

\subsection{Contributions} 

The lower bound of the deterministic competitive ratio for the 2-valued variant can be easily transformed  into a randomized lower bound.  In addition this transformation suggests a randomized online algorithm, which we show to have optimal competitive ratio  $1 +  (\alpha - 1)/(\alpha^2 + \alpha)$.

\section{Preliminaries}
\label{sec:prelim}

Formally an instance of the bounded delay buffer management problem consists of a finite sequence of jobs. Each job $j$ has a unit processing time, a release time $r_j\in \mathbb N$, a deadline $d_j\in \mathbb N$ and a priority weight $w_j\in \mathbb R_+$.  The time line consists of time slots $t\in\mathbb N$. During each time slot, at most one job can be executed. A job is pending at time $t$ if $r_j \leq t < d_j$ and if it has not already been scheduled.   An online algorithm $A$ can schedule at each time slot $t\in \mathbb N$ at most one pending job, and has to make this decision without knowledge of future released jobs.  The objective value of a schedule is the total priority weight of all scheduled jobs.   This value is compared with the objective value of an optimal schedule, and the resulting value is called the competitive ratio.  The optimal schedule can be computed offline by solving a maximum weight bipartite matching problem, matching jobs to time slots.  The competitive ratio of algorithm $A$ is the supremum of its competitive ratio over all instances. The competitive ratio of the problem is the infimum of the competitive ratio over all online algorithms. Usually the online computation paradigm is seen as a game played between the algorithm (making decisions which job to schedule) and the adversary (generating jobs).

If $A$ is a randomized algorithm, then the performance of $A$ is the expected objective value of the produced schedule.  Different adversarial models have been studied in the literature. In this paper we focus on the oblivious adversary, where the input is generated without knowledge of the random choices made by the algorithm.
\ifthenelse{\boolean{twocol}}{\begin{figure}}{\begin{wrapfigure}{R}{5cm}}
  \begin{tikzpicture}[scale=0.8]
\draw[|-|] (0,0) node[left]{job 1} -- node[above]{$1$} (1,0);
\draw[|-|] (0,-1) node[left]{job 2} -- node[above]{$\alpha$} (1,-1)  ;
\draw[-|](1,-1) -- (2,-1);
\draw (0,-2.7) node[left]{\textsc{Greedy}} ;
\draw (0,-3) rectangle node{$\alpha$} (1,-2.4) ;
\draw (0,-3.7) node[left]{\textsc{Opt} on instance 1} ;
\draw (0,-4) rectangle node{$\alpha$} (1,-3.4) ;
\draw (1,-4) rectangle node{$1$} (2,-3.4) ;

\draw (0,-5.7) node[left]{\textsc{PEDF}} ;
\draw (0,-6) rectangle node{$1$} (1,-5.4) ;
\draw (1,-6) rectangle node{$\alpha$} (2,-5.4) ;

\draw (0,-6.7) node[left]{job 3} ;
\draw[|-|] (1,-6.7) -- node[above]{$\alpha$} (2,-6.7);

\draw (0,-7.5) node[left]{\textsc{Opt} on instance 2} ;
\draw (0,-7.8) rectangle node{$\alpha$} (1,-7.2) ;
\draw (1,-7.8) rectangle node{$\alpha$} (2,-7.2) ;

\end{tikzpicture}
\caption{The deterministic lower bound from \cite{englert_considering_2012}.}
\label{fig:det-lb}
\ifthenelse{\boolean{twocol}}{\end{figure}
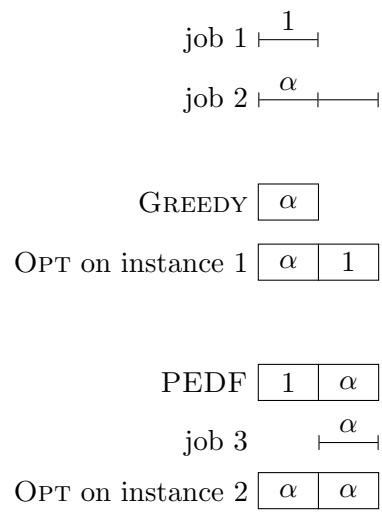}{\end{wrapfigure}{R}{5cm}}

Online algorithms for this problem heavily rely on the notion of a
\emph{provisional schedule}, which was introduced in \cite{englert_considering_2012}. Recall,
that for an algorithm $A$ and a point in time $t$, we define the set of {\em
pending} jobs as the set of jobs released prior or at time $t$, which have
their deadline strictly after $t$, and which have not yet been scheduled by
$A$. The provisional schedule $S$ is a maximum weight subset of pending jobs,
which can be scheduled from time $t$ on.  This set $S$ can be computed
greedily, by starting with $S=\emptyset$, and processing first all pending
heavy jobs, then all pending light jobs, in arbitrary order within each
category. Each considered job $j$ is then added to $S$ if and only if the
number of jobs $i\in S$ with $d_i \leq d_j$ is strictly less than $d_j-t$.
This condition ensures that there is a provisional feasible schedule for $S$.

There are two natural deterministic algorithms that are defined in
\cite{englert_considering_2012}. The first algorithm, which we call
\textsc{Greedy}, schedules at every time step, a job of the provisional
schedule of maximum weight, breaking ties by the deadlines of job (earliest
deadline has the highest priority). The second algorithm, we which we call Provisional Earliest Deadline First (\textsc{PEDF}), schedules at every time step, a job of the
provisional schedule of minimum deadline. Note that it differs from the usual
\textsc{EDF} algorithm (\emph{earliest deadline first}), which is not
restricted to select jobs from the provisional schedule.

The deterministic lower bound consists of an adversary which releases at time
0 two jobs, see Figure~\ref{fig:det-lb}. A job of weight $1$ and deadline $1$
and a job of weight $\alpha$ and deadline 2.  Any deterministic deterministic
algorithm either schedules the urgent job, or the heavy job, or no job at all.
 Clearly the last option is suboptimal, hence we can focus on the first two
options.   If the algorithm schedules the heavy job, then no more jobs are
released, and the competitive ratio is $\frac{1+\alpha}{\alpha}$. If the
algorithm schedules the urgent light job, then at time $1$ the adversary
releases another job of weight $\alpha$ and deadline $2$, leading to the ratio
$\frac{2 \alpha}{1+\alpha}$.  This shows a lower bound of  $\min \{\frac{1 +
\alpha}{\alpha}, \frac{2\alpha}{1+\alpha} \}$ on the deterministic competitive
ratio. It also suggests a simple deterministic algorithm. Let
$\alpha^*=1+\sqrt2$ be the solution to the equation $\frac{1 + \alpha}{\alpha}
= \frac{2\alpha}{1+\alpha}$. If $\alpha \leq \alpha^*$ run
\textsc{PEDF}, else run \textsc{Greedy}.  In
\cite{englert_considering_2012} this strategy was shown to be optimal.

\section{Randomized competitive ratio}
\label{sec:randomized}

The previous lower bound can be extended into a lower bound against an oblivious
randomized adversary. 

\begin{proposition}\label{prop:rand_lower_bound}
No randomized online algorithm can achieve a competitive ratio better than $R=1+\frac{\alpha- 1}{\alpha^2 + \alpha}$.
\end{proposition}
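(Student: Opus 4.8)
The plan is to invoke Yao's principle: to lower bound the competitive ratio of any randomized algorithm against an oblivious adversary, it suffices to exhibit one probability distribution $\mathcal{D}$ over instances and to show that $\mathbb{E}_{I\sim\mathcal{D}}[\mathrm{OPT}(I)] \geq R\cdot \mathbb{E}_{I\sim\mathcal{D}}[A(I)]$ for every deterministic online algorithm $A$. The distribution I would use is a randomization of the two-job instance already used for the deterministic bound (Figure~\ref{fig:det-lb}): at time $0$ release a light job ($w=1$, $d=1$) and a heavy job ($w=\alpha$, $d=2$); then, independently, with probability $1-p$ release a second heavy job ($w=\alpha$, $d=2$) at time $1$ (call this ``instance 2''), and with probability $p$ release nothing further (``instance 1''). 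The free parameter $p\in[0,1]$ is tuned at the end.

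Next I would reduce the space of deterministic algorithms. Since every job has deadline at most $2$, only the decisions in time slots $0$ and $1$ matter, and the decision at time $0$ must be made before the algorithm learns which of the two instances it faces. A deterministic algorithm therefore has only three relevant behaviors at time $0$: schedule the light urgent job, schedule the heavy job, or idle; after that, at time $1$ it may as well schedule the heaviest pending job given what it now observes, which is forced. One checks that idling at time $0$ is dominated on both instances (it gets $\alpha$ where the light-first policy gets $1+\alpha$), so it suffices to compare the policy $L$ (play the light job at time $0$) with the policy $H$ (play the heavy job at time $0$). A short case analysis gives, for $A=L$, value $1+\alpha$ on instance $1$ and $1+\alpha$ on instance $2$; for $A=H$, value $\alpha$ on instance $1$ and $2\alpha$ on instance $2$; and $\mathrm{OPT}=1+\alpha$ on instance $1$, $\mathrm{OPT}=2\alpha$ on instance $2$.

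Finally I would compute the expectations: $\mathbb{E}[L]=1+\alpha$, $\mathbb{E}[H]=\alpha(2-p)$, and $\mathbb{E}[\mathrm{OPT}]=p(1+\alpha)+(1-p)2\alpha$. Choosing $p=\frac{\alpha-1}{\alpha}$ equalizes the two algorithmic options, $\mathbb{E}[L]=\mathbb{E}[H]=1+\alpha$, hence $\max_{A}\mathbb{E}[A]=1+\alpha$, while $\mathbb{E}[\mathrm{OPT}]=\frac{\alpha^2+2\alpha-1}{\alpha}$. This yields $\frac{\mathbb{E}[\mathrm{OPT}]}{\max_A\mathbb{E}[A]}=\frac{\alpha^2+2\alpha-1}{\alpha^2+\alpha}=1+\frac{\alpha-1}{\alpha^2+\alpha}=R$, which is exactly the claimed bound.

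I do not expect a serious obstacle; the only step needing care is the reduction, namely arguing rigorously that no cleverer deterministic strategy exists — in particular that an algorithm cannot gain by delaying or interleaving decisions — which holds because the instance ends after two slots and the single bit of adversary randomness is revealed only at time $1$. The rest is the elementary one-parameter optimization over $p$, together with the observation that the maximum of the ratio is attained at the balancing point $p=(\alpha-1)/\alpha$ (perturbing $p$ in either direction only decreases it, so this single distribution already certifies the bound).
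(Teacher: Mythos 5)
Your proposal is correct and follows essentially the same route as the paper: Yao's principle applied to the same two-instance distribution (instance 1 with probability $(\alpha-1)/\alpha$, instance 2 with probability $1/\alpha$), the same reduction to the two deterministic policies ``light first'' versus ``heavy first,'' and the same balancing computation giving $\max_A \mathbb{E}[A]=1+\alpha$ against $\mathbb{E}[\mathrm{OPT}]=(\alpha^2+2\alpha-1)/\alpha$. The only (welcome) addition is your explicit remark that the balancing choice of $p$ is optimal for this family of distributions, which the paper does not bother to verify since any choice of $p$ already yields a valid lower bound.
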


\begin{proof}
Let $\sigma_1,\sigma_2$ be the two instances from the previous lower bound construction. Let $y$ be a distribution on $\sigma_1,\sigma_2$.
We denote ${\mathbb E}_y[\textrm{OPT}(\sigma_j)]$ the expected optimal profit, and by ${\mathbb E}_y[\textrm{A}(\sigma_j)]$ the expected profit of a given randomized algorithm A.
By Yao's principle, see \cite[Theorem 8.3]{borodin_online_2005}, the randomized competitive ratio against an oblivious adversary is lower bounded by 
\[
  \max_y \min_A {\mathbb E}_y[\textrm{OPT}(\sigma_j)] / {\mathbb E}_y[\textrm{A}(\sigma_j)].
\]
We choose $y=(\frac{\alpha - 1}{\alpha}, 1/\alpha)$, which leads to the expected optimal profit
\begin{align*}
  {\mathbb E}_y[\textrm{OPT}(\sigma_j)] = \frac{\alpha-1}\alpha (1+\alpha) + \frac1 \alpha 2 \alpha = 2 + \alpha - 1/\alpha.
\end{align*}
We recall that at time $0$, the instances $\sigma_1$ and $\sigma_2$ are indistinguishable.  Any deterministic algorithm has 3 options at time $0$, to execute the urgent light job, to execute the heavy job or to remain idle.  We ignore the last option, as it is clearly suboptimal.  Any deterministic algorithm, which starts by executing the urgent light job, has an expected profit of at most
\[
    y_1 (1+\alpha) + y_2 (1+\alpha) = 1+\alpha.
\]
In addition any deterministic algorithm, which starts by executing the heavy job, has an expected profit of at most
\[
    y_1 \alpha + y_2 2 \alpha = 
  \frac{\alpha-1}{\alpha} \alpha + \frac1{\alpha} 2 \alpha = 1+ \alpha.
\]
This shows the claimed lower bound.
\end{proof}

Note that the above lower bound is maximized at $\alpha = 1 + \sqrt{2}$ and gives the value $4 - 2 \sqrt{2} \approx 1.172.$

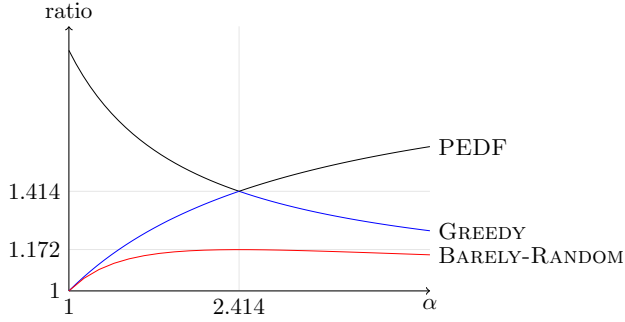
\begin{figure}
\begin{center}
  \scalebox{.8}{
  \begin{tikzpicture}[xscale=2,yscale=4]
\foreach \r in {1,1.172,1.414} {
\draw[color=gray!20] (1,\r) -- (4,\r);
\draw (1,\r) node[left] {\r};
}
\foreach \a in {1,2.414} {
\draw[color=gray!20] (\a,1) -- (\a,2.1);
\draw (\a,1) node[below] {\a};
}
\draw[->] (1,1) -- (4,1) node[below] {$\alpha$}; 
\draw[->] (1,1) -- (1,2.1) node[above] {ratio}; 
\draw [domain=1:2.414,variable=\a] plot (\a, {(\a+1)/\a});
\draw [domain=2.414:4,variable=\a,color=blue] plot (\a, {(\a+1)/\a}) node[right,color=black] {\textsc{Greedy}};
\draw [domain=1:2.414,variable=\a,color=blue] plot (\a, {(2*\a)/(1+\a)});
\draw [domain=2.414:4,variable=\a] plot (\a, {(2*\a)/(1+\a)}) node[right] {\textsc{PEDF}};
\draw [domain=1:4,variable=\a,color=red] plot (\a, {(\a * \a + 2*\a - 1)/(\a * \a + \a)}) node[right,color=black] {\textsc{Barely-Random}};
\end{tikzpicture}}
\end{center}
\caption{Competitive ratios as function of $\alpha$. The deterministic competitive ratio is the minimum of the competitive ratios of the algorithms \textsc{PEDF} and \textsc{Greedy}.}
\label{fig:ratios}
\end{figure}

\subsection{Upper bound}

The lower bound construction suggests a simple barely random algorithm. At
time $0$, the algorithm flips a biased coin and based on its outcome decides between the
two already mentioned deterministic algorithms. In other words,
\textsc{Barely-Random} will run \textsc{Greedy} with probability $p$ and
\textsc{PEDF} with probability $1-p$, for
\[
    p = \frac{\alpha^2 - 1}{\alpha^2 + 2 \alpha -1}.
\]
This surprisingly simple algorithm achieves the optimal competitive ratio.

\begin{theorem}
  Algorithm {\textsc{Barely-Random}} has a competitive ratio of $\frac{\alpha^2 + 2\alpha - 1}{\alpha^2 + \alpha}$ against an oblivious adversary. \label{th:rand_upper}
\end{theorem}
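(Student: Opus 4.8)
The plan is to carry out an amortized analysis with a potential function. For every fixed instance $\sigma$ we bound the expected profit of \textsc{Barely-Random}, which equals $p\cdot\textsc{Greedy}(\sigma)+(1-p)\cdot\textsc{PEDF}(\sigma)$, from below by $\textsc{Opt}(\sigma)/R$, where $R=\frac{\alpha^2+2\alpha-1}{\alpha^2+\alpha}$ is the claimed ratio and \textsc{Opt} is a fixed optimal offline schedule; this is exactly the guarantee against an oblivious adversary. Before anything else it is worth noting why a black-box combination of the two deterministic guarantees cannot suffice: \textsc{Greedy} is $(\alpha+1)/\alpha$-competitive and \textsc{PEDF} is $2\alpha/(\alpha+1)$-competitive, and at $\alpha=1+\sqrt2$ these two ratios coincide, so any convex combination of the two bounds yields only ratio $\sqrt2$ there, which is strictly larger than $R=4-2\sqrt2\approx 1.172$. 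Hence the proof has to use that the time slots on which \textsc{Greedy} falls behind \textsc{Opt} are, up to the right constants, exactly the slots on which \textsc{PEDF} pulls ahead, and conversely --- a correlation visible only to a slot-by-slot argument.

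Concretely, I would attach to each of the two deterministic algorithms a potential, $\Phi_G(t)$ and $\Phi_E(t)$, depending only on the current state of that algorithm (just after the arrivals of time $t$), and built from the weight of its provisional schedule together with the weight of the jobs \textsc{Opt} still has to schedule from time $t$ on. These are controlled by the provisional-schedule machinery of Section~\ref{sec:prelim}: a provisional schedule is a maximum-weight feasible subset of the pending jobs, \textsc{PEDF} drops a job from its provisional schedule essentially only by scheduling it, and \textsc{Greedy} drops a job without scheduling it only when that job is light and a heavy job of deadline $t+1$ forces it out. Let $\Phi(t)=p\,\Phi_G(t)+(1-p)\,\Phi_E(t)$, normalized so that $\Phi=0$ before the first arrival and $\Phi\ge0$ after the last slot.

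The core is the per-slot inequality
\[
  R\bigl(p\cdot\mathrm{gain}_G(t)+(1-p)\cdot\mathrm{gain}_E(t)\bigr)+\Delta\Phi(t)\ \ge\ \mathrm{gain}_{\textsc{Opt}}(t),
\]
where $\mathrm{gain}_A(t)\in\{0,1,\alpha\}$ is the weight scheduled by $A$ at slot $t$ and $\Delta\Phi(t)$ absorbs the arrivals of time $t$ and the moves of \textsc{Greedy}, \textsc{PEDF} and \textsc{Opt}; telescoping it over all slots yields $R\cdot\mathbb{E}[\textsc{Barely-Random}(\sigma)]\ge\textsc{Opt}(\sigma)$. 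As the left-hand side splits along the coefficients $p$ and $1-p$, it is enough to establish two linked inequalities of the form $R\,\mathrm{gain}_G(t)+\Delta\Phi_G(t)\ge\mathrm{gain}_{\textsc{Opt}}(t)-\xi(t)$ and $R\,\mathrm{gain}_E(t)+\Delta\Phi_E(t)\ge\mathrm{gain}_{\textsc{Opt}}(t)+\tfrac{p}{1-p}\,\xi(t)$: \textsc{Greedy} is allowed a per-slot deficit $\xi(t)$ precisely when \textsc{PEDF} produces the compensating surplus in the same slot, and symmetrically. I would derive these by a case analysis on what \textsc{Opt} does at slot $t$ (run a heavy job, run a light job, or idle), on the make-up of the two provisional schedules (numbers of heavy and light jobs, and which deadlines are tight), and on the arrivals. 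The two decisive cases are (i) \textsc{Greedy} running a heavy job while a light job used by \textsc{Opt} expires, and (ii) \textsc{PEDF} running a light job immediately before a heavy job used by \textsc{Opt} arrives; the values $p=\frac{\alpha^2-1}{\alpha^2+2\alpha-1}$ and $R=\frac{\alpha^2+2\alpha-1}{\alpha^2+\alpha}$ are exactly those making the required balance $(1-p)\cdot(\text{surplus})\ge p\cdot(\text{deficit})$ tight in both of them.

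The step I expect to be the main obstacle is keeping this joint state coherent. Since \textsc{Greedy} and \textsc{PEDF} schedule different jobs, their sets of pending jobs --- and thus their provisional schedules --- drift apart, yet the single fixed \textsc{Opt} must be charged against both simultaneously, and arrivals and expirations have to be attributed consistently to $\Phi_G$ and $\Phi_E$. The real work is choosing the potentials so that the slack $\xi(t)$ appearing in the two linked inequalities is genuinely produced by the other algorithm in the same slot (this is where the precise $p$ and $R$ are forced); once the potentials are fixed, the residual cases should collapse to elementary inequalities in $\alpha$.
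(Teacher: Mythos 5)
Your setup is correct --- bounding $p\cdot\textsc{Greedy}(\sigma)+(1-p)\cdot\textsc{PEDF}(\sigma)$ against a fixed \textsc{Opt} instance by instance is exactly what is needed, and your observation that a black-box convex combination of the two deterministic ratios only gives $\sqrt2$ at $\alpha=1+\sqrt2$ correctly identifies why a correlation argument is unavoidable. But what you have written is a plan, not a proof: the potentials $\Phi_G,\Phi_E$ are never defined, the slack term $\xi(t)$ is never specified, and the case analysis that would verify the two linked per-slot inequalities is never carried out. You yourself flag the decisive difficulty --- that \textsc{Greedy} and \textsc{PEDF} have diverging pending sets and that the surplus of one must be produced \emph{in the same slot} as the deficit of the other --- and then leave it unresolved. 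That difficulty is real: there is no reason the compensation should be simultaneous slot by slot (a heavy job that \textsc{PEDF} postpones may be scheduled by it many slots later), so any potential-based argument has to store this debt across time, and the entire content of the theorem lives in how that is done. As it stands, the argument cannot be checked or completed without inventing the missing machinery.

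For contrast, the paper avoids the per-slot bookkeeping entirely by a global counting argument. It first shows that \textsc{Opt} schedules at most as many jobs as \textsc{PEDF} and at most as many heavy jobs as \textsc{Greedy} (both run an earliest-deadline policy on the relevant job class), so $\textrm{profit}(\textsc{Opt})\le (h+h^*)\alpha+\ell+d$ in the notation where $h,h^*,\ell,\ell',d$ count heavy jobs common to both algorithms, heavy jobs only in \textsc{Greedy}, light jobs of \textsc{Greedy}, light jobs of \textsc{PEDF}, and slots where only \textsc{Greedy} idles. The key combinatorial fact is $d+h^*\le h$, obtained from an injective map of $L'$ into $H\cup L$ together with the slot-count identity $d+h^*+h+\ell=h+\ell'$; from there the theorem is a single chain of algebraic manipulations with the chosen $p$ and $R$. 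If you want to salvage your route, the honest path is to either construct the potentials explicitly and do the full case analysis, or to switch to a global charging scheme of this kind, which is where the correlation you correctly sensed is actually captured.
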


\begin{proof} Fix an instance, and denote by \textsc{Opt} an optimal schedule. 
  We start with showing some structural properties of the three schedules
\textsc{Opt}, \textsc{Greedy} and \textsc{PEDF}. 
Consider a time slot, where \textsc{Greedy} executes a heavy job $j$, while \textsc{PEDF} executes a light job.  By the definition of \textsc{PEDF} we known that \textsc{PEDF} will eventually execute $j$ as well.

In addition we observe that \textsc{Greedy} executes a maximum number of heavy jobs while \textsc{PEDF} executes a maximum number of jobs regardless of their weight.  This follows by the \emph{earliest deadline policy} applied by both algorithms, respectively on heavy pending jobs or on all pending jobs.  Indeed this policy is known to maximize the number of executed jobs for the unweighted unit length job scheduling problem.  
Therefore  \textsc{OPT} does not schedule more jobs than \textsc{PEDF} and does not schedule more heavy jobs than \textsc{Greedy}. Also since \textsc{PEDF} maximizes the number of pending jobs in every time slot, we know that whenever \textsc{PEDF} is idle, then \textsc{Greedy} is idle as well.

As a result, for the analysis of the competitive ratio it suffices to count the number of light and heavy jobs executed by respectively \textsc{Greedy} and \textsc{PEDF}.  
To formalize this intuition, we use the following notation.

\noindent
\scalebox{.9}{
  \begin{tabular}{ll}\\
    set & content \\ \hline
    $D$ & time slots where \textsc{Greedy} is idle but not \textsc{PEDF} \\
    $H^*$ & heavy jobs executed only by \textsc{Greedy} \\
    $H$ & heavy jobs executed both by \textsc{Greedy} and \textsc{PEDF}\\
    $L$ & light jobs executed by \textsc{Greedy} \\
    $L'$ & light jobs executed by \textsc{PEDF} \\
    \hline\\
  \end{tabular}
  }

\begin{figure*}
  \begin{center}
    \resizebox{.95\textwidth}{!}{
  \tikzset{every picture/.style={line width=0.75pt}} 
  \begin{tikzpicture}[x=0.75pt,y=0.75pt,yscale=-1,xscale=1]
    
    \draw  [color=gray, fill=white] (258.72,216.98) -- (274.96,216.98) -- (274.96,194.67) -- (258.72,194.67) -- cycle ;
    \draw  [color=gray, fill=white] (226.26,216.98) -- (242.49,216.98) -- (242.49,194.67) -- (226.26,194.67) -- cycle ;
    \draw  [color=gray, fill=white] (437.27,216.98) -- (453.5,216.98) -- (453.5,194.67) -- (437.27,194.67) -- cycle ;
    \draw  [color=black][fill=gray!35] (193.8,216.98) -- (210.03,216.98) -- (210.03,194.67) -- (193.8,194.67) -- cycle ;
    \draw  [color=black][fill=gray!35] (210.03,216.98) -- (226.26,216.98) -- (226.26,194.67) -- (210.03,194.67) -- cycle ;
    \draw  [fill=gray!80] (177.57,261.6) -- (193.8,261.6) -- (193.8,239.29) -- (177.57,239.29) -- cycle ;
    \draw  [fill=gray!80] (193.8,261.6) -- (210.03,261.6) -- (210.03,239.29) -- (193.8,239.29) -- cycle ;
    \draw  [fill=gray!80] (210.03,261.6) -- (226.26,261.6) -- (226.26,239.29) -- (210.03,239.29) -- cycle ;
    \draw  [fill=gray!80] (226.26,261.6) -- (242.49,261.6) -- (242.49,239.29) -- (226.26,239.29) -- cycle ;
    \draw  [fill=gray!80] (291.19,261.6) -- (307.42,261.6) -- (307.42,239.29) -- (291.19,239.29) -- cycle ;
    \draw  [fill=gray!80] (307.42,261.6) -- (323.65,261.6) -- (323.65,239.29) -- (307.42,239.29) -- cycle ;
    \draw  [fill=gray!80] (339.88,261.6) -- (356.11,261.6) -- (356.11,239.29) -- (339.88,239.29) -- cycle ;
    \draw  [fill=gray!80] (356.11,261.6) -- (372.34,261.6) -- (372.34,239.29) -- (356.11,239.29) -- cycle ;
    \draw  [fill=gray!80] (372.34,261.6) -- (388.57,261.6) -- (388.57,239.29) -- (372.34,239.29) -- cycle ;
    \draw  [fill=gray!80] (388.57,261.6) -- (404.81,261.6) -- (404.81,239.29) -- (388.57,239.29) -- cycle ;
    \draw  [fill=gray!80] (437.27,261.6) -- (453.5,261.6) -- (453.5,239.29) -- (437.27,239.29) -- cycle ;
    \draw  [color=black][fill=gray!35] (388.57,216.98) -- (404.81,216.98) -- (404.81,194.67) -- (388.57,194.67) -- cycle ;
    \draw  [fill=gray!80] (112.64,216.98) -- (128.87,216.98) -- (128.87,194.67) -- (112.64,194.67) -- cycle ;
    \draw  [fill=gray!80] (128.87,216.98) -- (145.1,216.98) -- (145.1,194.67) -- (128.87,194.67) -- cycle ;
    \draw  [fill=gray!80] (145.1,216.98) -- (161.34,216.98) -- (161.34,194.67) -- (145.1,194.67) -- cycle ;
    \draw  [fill=gray!80] (161.34,216.98) -- (177.57,216.98) -- (177.57,194.67) -- (161.34,194.67) -- cycle ;
    \draw  [fill=black!60] (177.57,216.98) -- (193.8,216.98) -- (193.8,194.67) -- (177.57,194.67) -- cycle ;
    \draw  [color=gray, fill=white] (242.49,216.98) -- (258.72,216.98) -- (258.72,194.67) -- (242.49,194.67) -- cycle ;
    \draw  [fill=gray!80] (274.96,216.98) -- (291.19,216.98) -- (291.19,194.67) -- (274.96,194.67) -- cycle ;
    \draw  [fill=black!60] (291.19,216.98) -- (307.42,216.98) -- (307.42,194.67) -- (291.19,194.67) -- cycle ;
    \draw  [fill=gray!80] (307.42,216.98) -- (323.65,216.98) -- (323.65,194.67) -- (307.42,194.67) -- cycle ;
    \draw  [fill=gray!80] (323.65,216.98) -- (339.88,216.98) -- (339.88,194.67) -- (323.65,194.67) -- cycle ;
    \draw  [fill=gray!80] (339.88,216.98) -- (356.11,216.98) -- (356.11,194.67) -- (339.88,194.67) -- cycle ;
    \draw  [fill=gray!80] (356.11,216.98) -- (372.34,216.98) -- (372.34,194.67) -- (356.11,194.67) -- cycle ;
    \draw  [fill=gray!80] (372.34,216.98) -- (388.57,216.98) -- (388.57,194.67) -- (372.34,194.67) -- cycle ;
    \draw  [fill=gray!80] (404.81,216.98) -- (421.04,216.98) -- (421.04,194.67) -- (404.81,194.67) -- cycle ;
    \draw  [fill=gray!80] (421.04,216.98) -- (437.27,216.98) -- (437.27,194.67) -- (421.04,194.67) -- cycle ;
    \draw  [color=gray][fill=gray!30] (161.34,261.6) -- (177.57,261.6) -- (177.57,239.29) -- (161.34,239.29) -- cycle ;
    \draw  [color=gray][fill=gray!30] (145.1,261.6) -- (161.34,261.6) -- (161.34,239.29) -- (145.1,239.29) -- cycle ;
    \draw  [color=gray][fill=gray!30] (404.81,261.6) -- (421.04,261.6) -- (421.04,239.29) -- (404.81,239.29) -- cycle ;
    \draw  [color=gray][fill=gray!30] (128.87,261.6) -- (145.1,261.6) -- (145.1,239.29) -- (128.87,239.29) -- cycle ;
    \draw  [color=gray][fill=gray!30] (112.64,261.6) -- (128.87,261.6) -- (128.87,239.29) -- (112.64,239.29) -- cycle ;
    \draw  [color=gray][fill=gray!30] (242.49,261.6) -- (258.72,261.6) -- (258.72,239.29) -- (242.49,239.29) -- cycle ;
    \draw  [color=gray][fill=gray!30] (258.72,261.6) -- (274.96,261.6) -- (274.96,239.29) -- (258.72,239.29) -- cycle ;
    \draw  [color=gray][fill=gray!30] (274.96,261.6) -- (291.19,261.6) -- (291.19,239.29) -- (274.96,239.29) -- cycle ;
    \draw  [color=gray][fill=gray!30] (323.65,261.6) -- (339.88,261.6) -- (339.88,239.29) -- (323.65,239.29) -- cycle ;
    \draw  [color=gray][fill=gray!30] (421.04,261.6) -- (437.27,261.6) -- (437.27,239.29) -- (421.04,239.29) -- cycle ;
    \draw  [color=gray, fill=white] (497,197.5) -- (517,197.5) -- (517,187.5) -- (497,187.5) -- cycle ;
    \draw  [fill=black!60] (497,204.5) -- (517,204.5) -- (517,214.5) -- (497,214.5) -- cycle ;
    \draw  [fill=gray!80] (497,220.5) -- (517,220.5) -- (517,230.5) -- (497,230.5) -- cycle ;
    \draw  [color=black][fill=gray!35] (497,236.5) -- (517,236.5) -- (517,246.5) -- (497,246.5) -- cycle ;
    \draw  [color=gray][fill=gray!30] (497,252.5) -- (517,252.5) -- (517,262.5) -- (497,262.5) -- cycle ;
    
    \draw (45,199.21) node [anchor=north west][inner sep=0.75pt]  [font=\scriptsize] [align=left] {\textsc{Greedy}};
    \draw (52.22,246.08) node [anchor=north west][inner sep=0.75pt]  [font=\scriptsize] [align=left] {PEDF};
    \draw (520,204) node [anchor=north west][inner sep=0.75pt]  [font=\tiny]  {$H^{*}$};
    \draw (520,188) node [anchor=north west][inner sep=0.75pt]  [font=\tiny]  {$D$};
    \draw (520,221) node [anchor=north west][inner sep=0.75pt]  [font=\tiny]  {$H$};
    \draw (520,236) node [anchor=north west][inner sep=0.75pt]  [font=\tiny]  {$L$};
    \draw (520,252) node [anchor=north west][inner sep=0.75pt]  [font=\tiny]  {$L'$};
    \draw (245.16,264) node [anchor=north west][inner sep=0.75pt]  [font=\scriptsize]  {$x$};
    \draw (263.71,264) node [anchor=north west][inner sep=0.75pt]  [font=\scriptsize]  {$y$};
    \draw (408.67,264) node [anchor=north west][inner sep=0.75pt]  [font=\scriptsize]  {$z$};
    \draw (116.91,267) node [anchor=north west][inner sep=0.75pt]  [font=\scriptsize]  {$a$};
    \draw (133.87,264) node [anchor=north west][inner sep=0.75pt]  [font=\scriptsize]  {$b$};
    \draw (195.87,184) node [anchor=north west][inner sep=0.75pt]  [font=\scriptsize]  {$x$};
    \draw (214.42,184) node [anchor=north west][inner sep=0.75pt]  [font=\scriptsize]  {$y$};
    \draw (391.97,184) node [anchor=north west][inner sep=0.75pt]  [font=\scriptsize]  {$z$};
    
    \draw    (120.76,250.45) -- (120.76,207.83) ;
    \draw [shift={(120.76,205.83)}, rotate = 450] [color={rgb, 255:red, 0; green, 0; blue, 0 }  ][line width=0.75]    (6.56,-1.97) .. controls (4.17,-0.84) and (1.99,-0.18) .. (0,0) .. controls (1.99,0.18) and (4.17,0.84) .. (6.56,1.97)   ;
    \draw    (136.99,250.45) -- (136.99,207.83) ;
    \draw [shift={(136.99,205.83)}, rotate = 450] [color={rgb, 255:red, 0; green, 0; blue, 0 }  ][line width=0.75]    (6.56,-1.97) .. controls (4.17,-0.84) and (1.99,-0.18) .. (0,0) .. controls (1.99,0.18) and (4.17,0.84) .. (6.56,1.97)   ;
    \draw    (153.22,250.45) -- (153.22,207.83) ;
    \draw [shift={(153.22,205.83)}, rotate = 450] [color={rgb, 255:red, 0; green, 0; blue, 0 }  ][line width=0.75]    (6.56,-1.97) .. controls (4.17,-0.84) and (1.99,-0.18) .. (0,0) .. controls (1.99,0.18) and (4.17,0.84) .. (6.56,1.97)   ;
    \draw    (169.45,250.45) -- (169.45,207.83) ;
    \draw [shift={(169.45,205.83)}, rotate = 450] [color={rgb, 255:red, 0; green, 0; blue, 0 }  ][line width=0.75]    (6.56,-1.97) .. controls (4.17,-0.84) and (1.99,-0.18) .. (0,0) .. controls (1.99,0.18) and (4.17,0.84) .. (6.56,1.97)   ;
    \draw    (250.61,250.45) -- (203.39,207.18) ;
    \draw [shift={(201.91,205.83)}, rotate = 402.5] [color={rgb, 255:red, 0; green, 0; blue, 0 }  ][line width=0.75]    (6.56,-1.97) .. controls (4.17,-0.84) and (1.99,-0.18) .. (0,0) .. controls (1.99,0.18) and (4.17,0.84) .. (6.56,1.97)   ;
    \draw    (266.84,250.45) -- (219.62,207.18) ;
    \draw [shift={(218.15,205.83)}, rotate = 402.5] [color={rgb, 255:red, 0; green, 0; blue, 0 }  ][line width=0.75]    (6.56,-1.97) .. controls (4.17,-0.84) and (1.99,-0.18) .. (0,0) .. controls (1.99,0.18) and (4.17,0.84) .. (6.56,1.97)   ;
    \draw    (331.76,250.45) -- (331.76,207.83) ;
    \draw [shift={(331.76,205.83)}, rotate = 450] [color={rgb, 255:red, 0; green, 0; blue, 0 }  ][line width=0.75]    (6.56,-1.97) .. controls (4.17,-0.84) and (1.99,-0.18) .. (0,0) .. controls (1.99,0.18) and (4.17,0.84) .. (6.56,1.97)   ;
    \draw    (283.07,250.45) -- (283.07,207.83) ;
    \draw [shift={(283.07,205.83)}, rotate = 450] [color={rgb, 255:red, 0; green, 0; blue, 0 }  ][line width=0.75]    (6.56,-1.97) .. controls (4.17,-0.84) and (1.99,-0.18) .. (0,0) .. controls (1.99,0.18) and (4.17,0.84) .. (6.56,1.97)   ;
    \draw    (429.15,250.45) -- (429.15,207.83) ;
    \draw [shift={(429.15,205.83)}, rotate = 450] [color={rgb, 255:red, 0; green, 0; blue, 0 }  ][line width=0.75]    (6.56,-1.97) .. controls (4.17,-0.84) and (1.99,-0.18) .. (0,0) .. controls (1.99,0.18) and (4.17,0.84) .. (6.56,1.97)   ;
    \draw    (412.92,250.45) -- (397.37,207.71) ;
    \draw [shift={(396.69,205.83)}, rotate = 430.01] [color={rgb, 255:red, 0; green, 0; blue, 0 }  ][line width=0.75]    (6.56,-1.97) .. controls (4.17,-0.84) and (1.99,-0.18) .. (0,0) .. controls (1.99,0.18) and (4.17,0.84) .. (6.56,1.97)   ;       
  \end{tikzpicture}
    }
  \end{center}
\caption{The injective mapping from $L'$ to $H \cup L$. Each job in $L'$ is mapped to either a heavy job in $H$ (e.g., $a,b$) or it is mapped to itself (e.g., $x,y$, and $z$). }
\end{figure*}
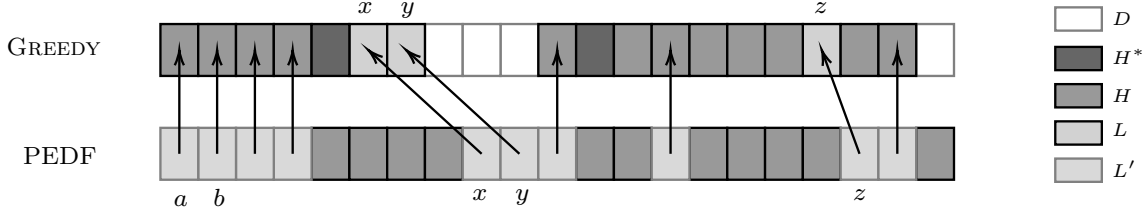

Note that the jobs executed by \textsc{PEDF} are $H \cup L'$ and the jobs executed by  \textsc{Greedy} are $H \cup H^* \cup L$.
We use the corresponding lower case notation for the cardinality of those sets.
The key to the proof is the following claim.

\begin{claim} \label{claim}
  $d+h^*\leq h$.
\end{claim}
The proof uses an injective mapping from $L'$ to $H\cup L$. This implies $\ell' \leq h+\ell$, and by the following equality, the claim holds:
\begin{equation} \label{eq:number_jobs}
  d+h^*+h+\ell = h + \ell'
\end{equation}
The left and right sides of the above equality respectively indicate the number of time slots throughout the execution of \textsc{Greedy} and \textsc{PEDF}.

To define the mapping, let $j\in L'$ be an arbitrary light job executed by \textsc{PEDF} at some moment $t$. If \textsc{Greedy} executes $j$ as well and not later than $t$, then we map $j$ to itself.  Otherwise, we map $j$ to the job $k$ executed by \textsc{Greedy} at time $t$.  We observe that if $k$ is a heavy job, then it is common to both schedules. This follows by the definition of \textsc{PEDF}, for which $k$ was pending at this point, hence $k\in H$.  To show the injective nature of the mapping, for the sake of contradiction, we assume that two jobs $j,k\in L'$ are mapped to the same job $k\in L$, executed by \textsc{Greedy} at some time $t$.  Then by the definition of the mapping, $k$ must be scheduled after $j$ by \textsc{PEDF}, so both jobs $j,k$ were pending at time $t$ for \textsc{PEDF} but also for \textsc{Greedy}.  This contradicts the fact that both schedules apply the same earliest deadline policy among the light jobs, and concludes the proof of the claim.

To summarize, we can express the objective values of the schedules as 
\begin{align*}
  \textrm{profit}(\textsc{OPT})    &\leq h^* \alpha + h \alpha  + \ell + d  \\
  \textrm{profit}(\textsc{Greedy}) &= h^* \alpha+  h \alpha + \ell\\
  \textrm{profit}(\textsc{PEDF})   &= h \alpha + \ell + h^* + d, \\
\end{align*}
where the last expression uses \eqref{eq:number_jobs}.

Claim~\ref{claim} implies \ifthenelse{\boolean{twocol}}{Inequality~(\ref{ineq:main})}{the following inequality}, which through a sequence of equivalent transformations transforms into the required bound on the competitive ratio\ifthenelse{\boolean{twocol}}{.}{:}

\ifthenelse{\boolean{twocol}}{\begin{figure*}[!h]\small}{\scriptsize}

 {
 \begin{align}
  \begin{split}
  h \alpha  + \ell & \geq  h^* \alpha  + d \alpha  
  \\
  h \alpha(\alpha-1)   + \ell (\alpha - 1)
&\geq
h^* \alpha (\alpha - 1) + d \alpha (\alpha - 1) 
  \\
  -(h \alpha + h^* \alpha + \ell )+
  \alpha (h \alpha + h^* + \ell + d)  
  &\geq
  h^* \alpha (\alpha - 1) + d \alpha^2 
  \\
  - (h \alpha + h^* \alpha + \ell)
  +   2 \alpha (h \alpha + h^* + \ell + d)  
  &\geq   
  \alpha(h \alpha + h^*+\ell+d) + h^* \alpha (\alpha-1)  + d \alpha^2 
  \\
  \alpha^2 (h \alpha + h^* \alpha + \ell)
  - (h \alpha + h^* \alpha + \ell)
  +
  2 \alpha (h \alpha + h^* + \ell + d)  
  &\geq   
  \alpha^2 (h \alpha + h^* \alpha + \ell) 
  + \alpha (h \alpha + h^* \alpha + \ell + d) +   d \alpha^2 
\\
  (\alpha^2 - 1) (h \alpha + h^* \alpha + \ell)
  + 
  2 \alpha (h \alpha + h^* + \ell + d)  
  &\geq   
  (\alpha^2 + \alpha) (h \alpha + h^* \alpha + \ell + d) 
  \\
  \frac{\alpha^2 - 1}{\alpha^2 + \alpha} (h \alpha + h^* \alpha + \ell) + \frac{2
    \alpha}{\alpha^2 + \alpha} (h \alpha + h^* + \ell + d) 
    & \geq  h \alpha + 
    h^* \alpha + \ell + d\\
    R \cdot p \cdot \textrm{profit}(\textsc{Greedy}) + R \cdot (1 - p)\cdot \textrm{profit}(\textsc{PEDF}) & \geq 
    \textrm{profit}(\textsc{Opt})\\  
    R \cdot \textrm{profit}(\textsc{Barely-Random}) & \geq  \textrm{profit}(\textsc{Opt}).
\end{split}\label{ineq:main}
\end{align}
}
\ifthenelse{\boolean{twocol}}{\end{figure*}}{}

\end{proof}

\section{Conclusion}

Now that the randomized competitive ratio of the two valued bounded delay
online buffer management problem is tackled --- against the oblivious adversary ---
it would be interesting to see how it can be generalized to more values for the job weights, for
example weights belonging to the set $\{1,\alpha,\alpha^2\}$, or even to
the general problem.  Currently the only known randomized algorithms for the
general problem are \textsc{Rmix} and \textsc{Remix}.

The algorithm \textsc{Rmix} was defined in \cite{bartal_online_2004,chin_online_2006} as follows. At
every time slot, consider $j$ the heaviest pending job. Denote $w_j$ its
weight. Let $x\in[-1,0]$ be a uniformly chosen real number. Let $f$ be a job
with minimal deadline and weight $w_f \geq e^x w_j$. Execute $f$ in this time
slot.  We can observe that, if executed on instance 1 of the lower bound instance from
Proposition~\ref{prop:rand_lower_bound}, at time $0$, it will choose the heavy
job with probability $\min\{1, \ln \alpha\}$.  Hence it has a larger tendency
to choose the heavy job than \textsc{Barely-Random}, and is fooled by instance
1.

The algorithm \textsc{ReMix} was defined in \cite{jez_universal_2013}, and
behaves even differently on instance 1. It will execute the heavy job with
probability $1-1/\alpha$ and the light job with probability $1/\alpha$, but is
also not optimal. These comparisons with \textsc{Barely-Random} are unfair in
a sense, because they have been designed for the general weighted case, and
mostly for an adaptive adversary. But the comparison indicates room for
improvement. The optimal randomized algorithm for the general weighted
problem, might combine ideas of all these mentioned algorithms, and in
particular borrow from the optimal deterministic algorithm
\cite{vesely_phi-competitive_2019}.

\section{Acknowledgment}

This work was partially supported by the French research agency (ANR-19-CE48-0016), the CNRS PEPS project ADVICE as well as by the EPSRC grant EP/S033483/1, and by the NSERC Discovery Grant. 

We would like to thank Spyros Angelopoulos for helpful discussions as well as anonymous referees for spotting an error in a previous version of this paper.

\end{document}